\pdfoutput=1
%
%
%


\documentclass{amsproc}
\usepackage{a4wide}
\usepackage{bookmark}



\usepackage{}


\newtheorem{theorem}{Theorem}[section]

\theoremstyle{definition}

\theoremstyle{remark}
\newtheorem{remark}[theorem]{Remark}
\newtheorem{corollary}[theorem]{Corollary}
\numberwithin{equation}{section}

\begin{document}
\title[Revisiting calculation of moments of number of comparisons]{Revisiting calculation of moments of number of comparisons\\ used by the randomized quick sort algorithm}
\author[Sumit Kumar Jha]{Sumit Kumar Jha\\ Preprint of the following paper in\\ Discrete Mathematics, Algorithms and Applications:\\ \url{http://www.worldscientific.com/doi/pdf/10.1142/S179383091750001X}}
\address{Center for Security, Theory, and Algorithmic Research\\ 
International Institute of Information Technology, 
Hyderabad, India}
\curraddr{}
\email{kumarjha.sumit@research.iiit.ac.in}
\thanks{}
    \begin{abstract}
We revisit the method of Kirschenhofer, Prodinger and Tichy to calculate the moments of number of comparisons used by the randomized quick sort algorithm. We reemphasize that this approach helps in calculating these quantities with less computation. We also point out that as observed by Knuth this method also gives moments for total path length of a binary search tree built over a random set of $n$ keys. 
\end{abstract}      
\maketitle
\section{Introduction}
\nocite{bworld1} 
\nocite{bworld2}
\nocite{bworld3}
\nocite{bworld4}
Consider the following variant of quick sort algorithm from \cite{Sedgewick2011}: the quick sort algorithm recursively sorts numbers in an array by partitioning it into two smaller and independent subarrays, and thereafter sorting these parts. The partitioning procedure chooses the last element in the array as \emph{pivot} and puts it in its right place where numbers to the left of it are smaller than it, and those to its right are larger than it. \par 
For purposes of this analysis assume that the input array to the quick sort algorithm contains distinct numbers which are randomly ordered. We may assume the input to the algorithm is simply a permutation of $\{1,2,\cdots,n\}$ (if the input array has $n$ elements).\par 
Let $S_{n}$ be the set of all $n!$ permutations of $\{1,2,\cdots,n\}$. Consider a uniform probability distribution on the set $S_{n}$, and define for all $\sigma\in S_{n}$, $C_{n}(\sigma)$ to be the number of comparisons used to sort $\sigma$ by the quick sort algorithm. We wish to calculate mean and variance of $C_{n}$ over the uniform distribution on $S_{n}$.\par 
Our aim here is to obtain following \cite{prodinger52} \cite{bworld}
\begin{theorem}[Knuth \cite{KnuthSort}]
\label{mainthm}
We have
$$\normalfont \text{Mean}(C_{n})=2((n+1)H_{n}-n);$$
and
$$\normalfont \text{Var}(C_{n})=7n^{2}-4(n+1)^{2}H_{n}^{(2)}-2(n+1)H_{n}+13n,$$
over the uniform probability distribution on $S_{n}$. Here we have used the notation $H_{n}=\sum_{k=1}^{n}\frac{1}{k}$ and $H_{n}^{(2)}=\sum_{k=1}^{n}\frac{1}{k^{2}}.$
\end{theorem}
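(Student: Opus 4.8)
The plan is to reduce the computation of $\mathrm{Mean}(C_n)$ and $\mathrm{Var}(C_n)$ to solving two linear recurrences of the same shape, set up by conditioning on the rank of the pivot. The pivot (the last array entry) has a rank $k$ that is uniform on $\{1,\dots,n\}$, and after the $n+1$ comparisons of the partitioning step the two recursive calls act on independent, uniformly random arrays of sizes $k-1$ and $n-k$. Hence, writing $C',C''$ for independent copies of the family $(C_m)_{m\ge 0}$ with $C_0\equiv 0$, and $k$ uniform on $\{1,\dots,n\}$ independent of them,
$$ C_n \;\stackrel{d}{=}\; (n+1)+C'_{k-1}+C''_{n-k}. $$
Taking expectations and writing $a_n=\mathrm{Mean}(C_n)$ gives $a_n=(n+1)+\frac2n\sum_{j=0}^{n-1}a_j$. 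Multiplying by $n$, subtracting the same identity with $n$ replaced by $n-1$, and dividing by $n(n+1)$ turns this into $\frac{a_n}{n+1}=\frac{a_{n-1}}{n}+\frac{2}{n+1}$, which telescopes to $a_n=2\big((n+1)H_n-n\big)$; equivalently the generating function $A(z)=\sum_{n\ge 0}a_nz^n$ solves the first order linear differential equation $A'(z)=\frac{2}{1-z}A(z)+\text{(explicit polynomial part)}$, integrating factor $(1-z)^2$, and equals $\dfrac{-2\ln(1-z)}{(1-z)^2}$. This settles the first assertion.

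For the second moment I would square the distributional identity and take expectations, using independence of $C'$ and $C''$, so that $E[C'_{k-1}C''_{n-k}]=a_{k-1}a_{n-k}$. With $b_n=E[C_n^2]$ this gives a recurrence of exactly the same type,
$$ b_n=(n+1)^2+\frac{4(n+1)}{n}\sum_{j=0}^{n-1}a_j+\frac2n\sum_{k=1}^{n}a_{k-1}a_{n-k}+\frac2n\sum_{j=0}^{n-1}b_j, $$
the only new ingredient being the convolution $\Sigma_n:=\sum_{k=1}^{n}a_{k-1}a_{n-k}$. Here is where the method of Kirschenhofer, Prodinger and Tichy saves work: on the generating function level $\sum_{n\ge 1}\Sigma_n z^n=zA(z)^2=\dfrac{4z\ln^2(1-z)}{(1-z)^4}$, so $\Sigma_n$ comes from a single coefficient extraction, and it is precisely here that the second order harmonic numbers appear, through $[z^n]\dfrac{\ln^2(1-z)}{1-z}=H_n^2-H_n^{(2)}$ together with the partial sum identities $\sum_{k=1}^n H_k=(n+1)H_n-n$, $\sum_{k=1}^n H_k^{(2)}=(n+1)H_n^{(2)}-H_n$ and $\sum_{k=1}^n H_k^2=(n+1)H_n^2-(2n+1)H_n+2n$, applied once for each extra factor $(1-z)^{-1}$.

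Once $\Sigma_n$ is in closed form the inhomogeneous term of the $b_n$ recurrence is explicit, and $b_n$ is found by the very same telescoping (same integrating factor $(1-z)^2$) that solved the $a_n$ recurrence; finally $\mathrm{Var}(C_n)=b_n-a_n^2$. In this last step the $(n+1)^2H_n^2$ parts of $b_n$ and of $a_n^2=4(n+1)^2H_n^2-8n(n+1)H_n+4n^2$ cancel, the surviving second order harmonic contribution assembles into $-4(n+1)^2H_n^{(2)}$, and the remaining polynomial and $H_n$ terms collapse to $7n^2-2(n+1)H_n+13n$, giving the claimed identity.

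The main obstacle is exactly the evaluation of $\Sigma_n$, i.e. the clean extraction of $[z^n]\,\ln^2(1-z)\,(1-z)^{-4}$: it is the single place where the $H_n^{(2)}$ term is created and the step where bookkeeping slips are easiest; the two telescopings and the cancellations in $b_n-a_n^2$ are otherwise routine, and streamlining them is exactly the point of the method being revisited. I would close with the remark, due to Knuth, that replacing the partitioning cost $n+1$ by $n-1$ changes the two recurrences only in their inhomogeneous polynomial part and leaves the homogeneous operator — hence the whole solution scheme — intact; the random variable then produced is the total internal path length of a binary search tree built over a random permutation of $\{1,\dots,n\}$, so the same computation delivers its mean and variance too.
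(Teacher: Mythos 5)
Your proposal is correct in outline, but it takes a genuinely different route from the paper. You work directly with the distributional recurrence $C_n \stackrel{d}{=} (n+1)+C'_{k-1}+C''_{n-k}$, derive elementary recurrences for the ordinary moments $a_n=E[C_n]$ and $b_n=E[C_n^2]$, solve them by the multiply-by-$n$-and-difference telescoping, and use a generating function only to evaluate the convolution $\Sigma_n=\sum_{k}a_{k-1}a_{n-k}$ as $[z^n]\,zA(z)^2$, finishing with $\mathrm{Var}(C_n)=b_n-a_n^2$. The paper instead encodes everything in the probability generating functions $G_n(z)$, passes to the bivariate $H(z,u)=\sum_n G_n(z)u^n$ satisfying $\partial_u H(z,u)=z^2H^2(z,zu)$, expands in factorial-moment generating functions $f_s(u)$, solves first-order linear ODEs (integrating factor $(1-u)^2$) for $f_1$ and $f_2$, extracts coefficients once via the expansions it quotes from Greene and Knuth, and concludes with $\mathrm{Var}(C_n)=\beta_2(n)-\beta_1(n)^2+\beta_1(n)$, $\beta_2$ being the second \emph{factorial} moment. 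Your route buys elementarity (no functional equation, no factorial moments), but it pays in bookkeeping at exactly the spots the paper's method is meant to streamline: squaring the recurrence, differencing the mixed term $\tfrac{4(n+1)}{n}\sum_{j<n}a_j$, and extracting $[z^n]\log^2(1-z)(1-z)^{-4}$ — note that iterating your three unweighted partial-sum identities once per factor $(1-z)^{-1}$ is not quite enough, since the second and third iterations also require weighted sums such as $\sum_k kH_k$, $\sum_k kH_k^2$, $\sum_k kH_k^{(2)}$; the cleaner tool is the general expansion with $m=3$ that the paper uses. One attribution slip: the convolution-by-generating-function step you credit to Kirschenhofer, Prodinger and Tichy is not their method — their contribution, as the paper presents it, is precisely the factorial-moment/functional-equation machinery you bypass. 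Your closing remark on the $n-1$ variant and binary search tree path length agrees with the paper's Section 3.
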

Before proceeding we would like to point out that Hennequin \cite{bworld2} has computed the first five cumulants of the number of comparisons of Quicksort. Also, the variance of the number of comparisons of Quicksort is computed in \cite{bworld1}.
\section{Calculation of mean and variance}
Let $a_{n,s}$ be the number of permutations of $n$ elements requiring a total of $s$ comparisons to sort by the procedure of quicksort.\par 
We start by defining the corresponding \emph{probability generating function}: 
$$G_{n}(z)=\sum_{k\geq 0}\frac{a_{n,k}\, z^{k}}{n!}.$$
\begin{theorem}
For $n\geq 1$
\begin{equation}
\label{eq1}
G_{n}(z)=\frac{z^{n+1}}{n}\sum_{1\leq j \leq n}G_{n-j}(z)G_{j-1}(z),
\end{equation}
and 
\begin{equation}
\label{eq2}
G_{0}(z)=1.
\end{equation}
\end{theorem}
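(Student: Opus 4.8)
The plan is to set up a first-step recurrence by conditioning on the rank of the pivot, and then to translate it into generating functions using the independence of the two recursive calls. Fix $n\geq 1$ and let $\sigma$ be uniform on $S_{n}$. The algorithm of \cite{Sedgewick2011} takes the last array entry as pivot; since $\sigma$ is uniform, the rank $j$ of this entry (its position in sorted order) is uniform on $\{1,2,\dots,n\}$. The partitioning routine compares the pivot with each of the remaining $n-1$ entries and, because of the boundary tests in this particular implementation, performs $n+1$ comparisons in all; afterwards it has placed the $j-1$ entries below the pivot in a left block and the $n-j$ entries above it in a right block, and it recursively sorts each block. (For $n=1$ this already predicts $n+1=2$ comparisons, consistent with $G_{1}(z)=z^{2}$, as one reads off from \eqref{eq1} and \eqref{eq2}.)

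The key point is that, conditioned on the pivot rank being $j$, the relative order of the left block and that of the right block are independent and are uniformly distributed over $S_{j-1}$ and $S_{n-j}$, respectively. Granting this and writing $C'_{j-1},C'_{n-j}$ for independent copies of $C_{j-1},C_{n-j}$, we have $C_{n}\overset{d}{=}(n+1)+C'_{j-1}+C'_{n-j}$ with $j$ uniform on $\{1,\dots,n\}$. Hence $z^{C_{n}}$ equals $z^{n+1}\,z^{C'_{j-1}}\,z^{C'_{n-j}}$, and since $G_{k}(z)=\mathbb{E}\!\left[z^{C_{k}}\right]$ and the last two factors are independent given $j$,
\[
\mathbb{E}\!\left[z^{C_{n}}\mid j\right]=z^{n+1}\,G_{j-1}(z)\,G_{n-j}(z);
\]
averaging over the uniform $j$ then yields $G_{n}(z)=\dfrac{z^{n+1}}{n}\sum_{1\leq j\leq n}G_{n-j}(z)G_{j-1}(z)$, which is \eqref{eq1}. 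For \eqref{eq2}, note that $S_{0}$ consists of the single empty permutation, which requires no comparisons, so $a_{0,0}=1$ and $a_{0,k}=0$ for $k\geq 1$, giving $G_{0}(z)=1$.

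The one step that genuinely requires care is the independence-and-uniformity claim about the two blocks; it is not a purely formal statement and must be checked against the specific partition procedure, since not every partitioning scheme preserves uniformity of the subfiles. I would either appeal to the standard treatment of this point (Knuth \cite{KnuthSort}, or \cite{Sedgewick2011}) or verify it by the counting bijection: among the $(n-1)!$ permutations of $S_{n}$ whose last entry has rank $j$, exactly $\binom{n-1}{j-1}$ give rise to each of the $(j-1)!\,(n-j)!$ possible pairs of block orderings. Everything after that is routine bookkeeping with generating functions.
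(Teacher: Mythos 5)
Your proof is correct and follows essentially the same route as the paper: both condition on the rank of the pivot, charge $n+1$ comparisons to the first partitioning stage, and exploit the independence/uniformity of the two subfiles to multiply the generating functions of the subproblems. The only cosmetic difference is that the paper first writes the counting recurrence $a_{n,s}=\sum_{k}\binom{n-1}{k-1}\sum_{i+j=s-(n+1)}a_{n-k,i}a_{k-1,j}$ (where the binomial coefficient encodes exactly the interleaving count you verify) and then normalizes and sums over $s$, whereas you work directly with $G_{n}(z)=\mathbb{E}\bigl[z^{C_{n}}\bigr]$ and conditional expectation; the substance, including the randomness-preservation point you rightly flag, is the same.
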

\begin{proof}
The first partitioning stage requires $n+1$ comparisons (for some other variants this might be $n-1$). If the pivot element is $k$th largest, then the sub arrays after partitioning are of sizes $k-1$ and $n-k$. Thus we can write
\begin{equation}
\label{eq3}
a_{n,s}=\sum_{1\leq k \leq n}\binom{n-1}{k-1}\sum_{i+j=s-(n+1)}a_{n-k,i}\,a_{k-1,j}.
\end{equation}
Multiplying equation \eqref{eq3} by $z^{s}$ and dividing by $n!$ we get
\begin{eqnarray*}
\frac{a_{ns}\, z^{s}}{n!}=\sum_{1\leq k \leq n}\frac{z^{s}}{n}\, \sum_{i+j=s-(n+1)}\frac{a_{n-k,i}}{(n-k)!}\cdot \frac{a_{k-1,j}}{(k-1)!}\\
=\sum_{1\leq k \leq n}\frac{z^{s}}{n}\cdot \left\{\text{coefficient of } z^{s-(n+1)} \text{ in } G_{n-k}(z)\cdot G_{k-1}(z)\right\}\\
=\sum_{1\leq k \leq n}\frac{z^{n+1}}{n}\cdot z^{s-(n+1)} \left\{\text{coefficient of } z^{s-(n+1)} \text{ in } G_{n-k}(z)\cdot G_{k-1}(z)\right\}
\end{eqnarray*}
after which summing on $s$ gives us equation \eqref{eq1}.
\end{proof}
We will now consider the \emph{double generating function} $H(z,u)$ defined by
\begin{equation}
\label{eq4}
H(z,u)=\sum_{n\geq 0}G_{n}(z) u^{n}.
\end{equation}
\begin{corollary} We have
\begin{equation}
\label{eq5}
\frac{\partial H(z,u)}{\partial u}=z^{2}\cdot H^{2}(z,zu),
\end{equation}
and 
\begin{equation}
\label{eq6}
H(1,u)=(1-u)^{-1}.
\end{equation}
\begin{proof}
From equation \eqref{eq1} we have
\begin{eqnarray*}
\frac{\partial H(z,u)}{\partial u}=z^{2}\sum_{n\geq 1}(uz)^{n-1}\sum_{1\leq j \leq n}G_{n-j}(z)G_{j-1}(z)\\
=z^{2}\sum_{n\geq 1}(uz)^{n-1} \cdot \left\{\text{coefficient of } (uz)^{n-1} \text{ in } H(z,uz)\cdot H(z,uz) \right\} \\
=z^{2}\cdot H(z,zu)\cdot H(z,zu).
\end{eqnarray*} 
Equation \eqref{eq6} follows from the fact that $G_{n}(1)=1$.
\end{proof}
\end{corollary}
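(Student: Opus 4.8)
The plan is to establish the two identities \eqref{eq5} and \eqref{eq6} separately, in both cases by working directly with the series $H(z,u)=\sum_{n\geq 0}G_{n}(z)u^{n}$ term by term and then invoking the recurrence \eqref{eq1}.

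For \eqref{eq5} I would first differentiate $H(z,u)$ formally with respect to $u$, so that $\frac{\partial H(z,u)}{\partial u}=\sum_{n\geq 1}n\,G_{n}(z)\,u^{n-1}$. Multiplying \eqref{eq1} through by $n$ replaces $n\,G_{n}(z)$ with $z^{n+1}\sum_{1\leq j\leq n}G_{n-j}(z)\,G_{j-1}(z)$, and pulling out a factor $z^{2}$ rewrites the series as $z^{2}\sum_{n\geq 1}(zu)^{n-1}\sum_{1\leq j\leq n}G_{n-j}(z)\,G_{j-1}(z)$. The key step is to recognize the inner sum as a Cauchy-product coefficient: the substitution $b=j-1$ turns it into $\sum_{b=0}^{n-1}G_{(n-1)-b}(z)\,G_{b}(z)$, which is exactly the coefficient of $w^{n-1}$ in $H(z,w)^{2}$. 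Reindexing the outer sum by $m=n-1$ then gives $z^{2}\sum_{m\geq 0}(zu)^{m}\,[w^{m}]\,H(z,w)^{2}=z^{2}\,H(z,zu)^{2}$, which is \eqref{eq5}.

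For \eqref{eq6} I would simply observe that $G_{n}(z)$ is a probability generating function: since every permutation of $n$ distinct elements is sorted by quicksort in finitely many comparisons, $\sum_{k\geq 0}a_{n,k}=n!$, hence $G_{n}(1)=1$ for all $n\geq 0$. Substituting $z=1$ into \eqref{eq4} then yields $H(1,u)=\sum_{n\geq 0}u^{n}=(1-u)^{-1}$.

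The only real obstacle here is bookkeeping: one must carry out the index shift in the convolution carefully so that the power of $w$ (and hence of $zu$ after the substitution $w\mapsto zu$) lines up on both sides, and one should be content to treat everything at the level of formal power series in $u$ with $z$ a parameter, which legitimizes the termwise differentiation and the substitution. Once the inner sum is identified with $[w^{n-1}]H(z,w)^{2}$, the remaining manipulations are immediate.
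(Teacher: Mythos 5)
Your proposal is correct and follows essentially the same route as the paper: differentiate $H(z,u)$ termwise in $u$, use the recurrence \eqref{eq1} to replace $n\,G_{n}(z)$, recognize the inner convolution $\sum_{1\leq j\leq n}G_{n-j}(z)G_{j-1}(z)$ as the coefficient of $(zu)^{n-1}$ in $H(z,zu)^{2}$, and sum; equation \eqref{eq6} follows in both treatments from $G_{n}(1)=1$. Your version merely spells out the index shift and the formal-power-series justification a bit more explicitly than the paper does.
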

Now we write the $s$th factorial moments $\beta_{s}(n)$ of the random variable with the aid of the probability generating function $G_{n}(z)$:
\begin{equation}
\label{eq7}
\beta_{s}(n)=\left[\frac{d^{s}}{dz^{s}}G_{n}(z)\right]_{z=1}.
\end{equation}
The generating functions $f_{s}(u)$ of $\beta_{s}(n)$ are
\begin{equation}
\label{eq8}
f_{s}(u)=\sum_{n\geq 0}\beta_{s}(n)u^{n}.
\end{equation}
By Taylor's formula and equation \eqref{eq7} we get
\begin{equation}
\label{eq9}
H(z,u)=\sum_{s\geq 0}f_{s}(u)\frac{(z-1)^{s}}{s!}.
\end{equation}
\begin{theorem}
For integer $s\geq 0$ we have
\begin{equation}
\label{eq10}
f'_{s}(u)=s!\cdot \sum_{i+j+k+l+m=s}\frac{a_{i}\cdot f^{(k)}_{j}(u) \cdot f_{l}^{(m)}(u)\cdot u^{k+m} }{j!\cdot k!\cdot  l!\cdot m!},
\end{equation}
where
$$
a_{k}=
\begin{cases}
1&\text{if $k=0$};\\
2&\text{if $k=1$};\\
1&\text{if $k=2$};\\
0&\text{if $k>2$}.
\end{cases}
$$
\end{theorem}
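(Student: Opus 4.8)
The plan is to feed the Taylor expansion \eqref{eq9} of $H$ into the partial differential equation \eqref{eq5} and then compare the coefficients of $(z-1)^{s}$ on the two sides. Throughout, every identity below is to be read as an identity of formal power series in $z-1$ whose coefficients are ordinary power series in $u$ --- the coefficient of $u^{n}$ in $f_{s}(u)$ is the finite number $\beta_{s}(n)$ --- so the interchanges of summation are harmless and there is no convergence question to settle.

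The left-hand side of \eqref{eq5} is immediate: differentiating \eqref{eq9} termwise in $u$ gives
\[
\frac{\partial H(z,u)}{\partial u}=\sum_{s\geq 0}f'_{s}(u)\,\frac{(z-1)^{s}}{s!},
\]
so the coefficient of $(z-1)^{s}$ on the left is $f'_{s}(u)/s!$.

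For the right-hand side I would produce three expansions in powers of $z-1$ and multiply them. First, $z^{2}=1+2(z-1)+(z-1)^{2}=\sum_{i\geq 0}a_{i}(z-1)^{i}$, with $a_{i}$ exactly as in the statement. Second and third, applying \eqref{eq9} with $u$ replaced by $zu$ gives $H(z,zu)=\sum_{j\geq 0}f_{j}(zu)(z-1)^{j}/j!$, and since $zu-u=u(z-1)$, Taylor's formula for the one-variable function $f_{j}$ expanded about $u$ gives $f_{j}(zu)=\sum_{k\geq 0}f_{j}^{(k)}(u)\,u^{k}(z-1)^{k}/k!$, hence
\[
H(z,zu)=\sum_{j\geq 0}\sum_{k\geq 0}\frac{f_{j}^{(k)}(u)\,u^{k}}{j!\,k!}\,(z-1)^{j+k}.
\]
Multiplying the series for $z^{2}$ by two copies of this one and collecting the coefficient of $(z-1)^{s}$ in $z^{2}H^{2}(z,zu)$ produces
\[
\sum_{i+j+k+l+m=s}\frac{a_{i}\,f_{j}^{(k)}(u)\,f_{l}^{(m)}(u)\,u^{k+m}}{j!\,k!\,l!\,m!},
\]
where $(j,k)$ records the index pair from the first copy of $H(z,zu)$, $(l,m)$ the pair from the second, and $i$ the contribution of $z^{2}$, so that the total power of $z-1$ is $i+j+k+l+m$ and the total power of $u$ created by the two inner expansions is $u^{k+m}$.

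Equating the coefficients of $(z-1)^{s}$ coming from \eqref{eq5} and multiplying through by $s!$ yields \eqref{eq10}. I expect the only genuinely fiddly point to be this last piece of bookkeeping --- keeping the two convolutions straight (one from the product of the two $H$'s, one hidden inside each $H(z,zu)$ via the shift $zu=u+u(z-1)$) and checking that the powers of $u$ come out as $u^{k+m}$ --- but this is routine once the three expansions above are in hand.
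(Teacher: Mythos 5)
Your proposal is correct and follows essentially the same route as the paper: substitute the expansion \eqref{eq9} into \eqref{eq5}, write $z^{2}=\sum_{i}a_{i}(z-1)^{i}$, Taylor-expand $f_{j}(zu)$ about $u$ using $zu-u=u(z-1)$ (the paper's equation \eqref{eq11}), and compare coefficients of $(z-1)^{s}$. The bookkeeping of the five-fold convolution and the factor $u^{k+m}$ matches the paper's computation exactly.
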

\begin{proof}
Using Taylor's theorem we can write
$$f_{j}(x)=\sum_{k\geq 0}\frac{f_{j}^{(k)}(u)(x-u)^{k}}{k!}$$
which on substituting $x=uz$ gives
\begin{equation}
\label{eq11}
f_{j}(uz)=\sum_{k\geq 0}\frac{f_{j}^{(k)}(u)(z-1)^{k}u^{k}}{k!}.
\end{equation}
\par 
Now substituting equation \eqref{eq9} in equation \eqref{eq5} gives:
\begin{eqnarray*}
\sum_{s\geq 0}f'_{s}(u)\frac{(z-1)^{s}}{s!}=z^{2}\cdot \sum_{p\geq 0}f_{p}(uz)\frac{(z-1)^{p}}{p!} \cdot \sum_{r\geq 0}f_{r}(uz)\frac{(z-1)^{r}}{r!}\\
=\sum_{i\geq 0}a_{i}(z-1)^{i}\cdot \sum_{p\geq 0}\frac{(z-1)^{p}}{p!}\sum_{l\geq 0}\frac{f_{p}^{(l)}(u)(z-1)^{l}u^{l}}{l!}\cdot  \sum_{r\geq 0}\frac{(z-1)^{r}}{r!}\sum_{m\geq 0}\frac{f_{r}^{(m)}(u)(z-1)^{m}u^{m}}{m!}\\
=\sum_{h\geq 0}(z-1)^{h}\sum_{i+j+k+l+m=h}a_{i}\cdot \frac{1}{j!}\cdot \frac{f^{(k)}_{j}(u)\, u^{k}}{k!}\cdot \frac{1}{l!}\cdot \frac{f_{l}^{(m)}(u)\, u^{m}}{m!}
\end{eqnarray*}
where in the second last line we replaced $z^{2}$ by $\sum_{i\geq 0} a_{i}(z-1)^{i}$. Now comparing coefficients on both sides of the equation gives
$$f'_{s}(u)=s!\cdot \sum_{i+j+k+l+m=s}\frac{a_{i}\cdot f^{(k)}_{j}(u) \cdot f_{l}^{(m)}(u)\cdot u^{k+m} }{j!\cdot k!\cdot  l!\cdot m!}.$$
\end{proof}
\begin{remark}
For asymptotic theory of differential equations originating here we recommend reader the paper \cite{bworld4}.
\end{remark}
\begin{corollary}
We have
\begin{equation}
\label{eq12}
f_{0}(u)=(1-u)^{-1},
\end{equation}
\begin{equation}
\label{eq13}
f_{1}(u)=\frac{2}{(1-u)^{2}}\log\frac{1}{1-u},
\end{equation}
\begin{equation}
\label{eq14}
f_{2}(u)=\frac{8\log^{2}(1-u)}{(1-u)^{3}}-\frac{8\log(1-u)}{(1-u)^{3}}-\frac{4\log^{2}(1-u)}{(1-u)^{2}}+\frac{12\log(1-u)}{(1-u)^{2}}+\frac{6}{(1-u)^{3}}-\frac{6}{(1-u)^{2}}.
\end{equation}
\end{corollary}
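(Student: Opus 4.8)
The plan is to feed the recurrence \eqref{eq10} with $s=0,1,2$ in succession. The key structural observation is that among the tuples $(i,j,k,l,m)$ with $i+j+k+l+m=s$, the only ones in which $f_s$ itself can appear on the right--hand side are $(0,s,0,0,0)$ and $(0,0,0,s,0)$, and each of these contributes exactly $f_0(u)\,f_s(u)$ with \emph{no} derivative of $f_s$ (the derivative orders $k$ and $m$ are forced to be $0$). Hence, once $f_0,\dots,f_{s-1}$ are known, \eqref{eq10} becomes a \emph{linear} first--order ODE
$$f_s'(u)-2f_0(u)\,f_s(u)=R_s(u),$$
where $R_s$ is an explicit function assembled from $f_0,\dots,f_{s-1}$ and their derivatives. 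The initial data are read off from \eqref{eq7}, \eqref{eq8} and \eqref{eq2}: $f_s(0)=\beta_s(0)=\bigl[\tfrac{d^s}{dz^s}G_0(z)\bigr]_{z=1}$, so $f_0(0)=1$ and $f_s(0)=0$ for $s\ge1$.

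For $s=0$ the only admissible tuple is the all--zero one and \eqref{eq10} reduces to $f_0'=a_0f_0^2=f_0^2$; integrating with $f_0(0)=1$ gives \eqref{eq12}. (This also follows at once from \eqref{eq6} and \eqref{eq9}, since $H(1,u)=f_0(u)$.) Once $f_0(u)=(1-u)^{-1}$ is in hand, every subsequent ODE has integrating factor $(1-u)^2$, and this is exactly what keeps the computation short.

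For $s=1$ the tuples of weight one place the single unit on one of the five slots; collecting their contributions gives $f_1'=2f_0^2+2f_0f_1+2uf_0f_0'$. Substituting $f_0=(1-u)^{-1}$ and multiplying by the integrating factor yields $\bigl((1-u)^2f_1\bigr)'=2+\tfrac{2u}{1-u}=\tfrac{2}{1-u}$, and one integration with $f_1(0)=0$ gives \eqref{eq13}. The case $s=2$ is the same recipe: enumerate the (fifteen) tuples with $i+j+k+l+m=2$, weight each by $a_i/(j!\,k!\,l!\,m!)$ and by $u^{k+m}$, note that the two tuples carrying $f_2$ contribute $2f_0f_2$ while the remaining ones involve only $f_0,f_1$ and their first two derivatives, substitute \eqref{eq12} and \eqref{eq13}, and simplify to get $\bigl((1-u)^2f_2\bigr)'$ as an explicit combination of $(1-u)^{-k}$ and $\log(1-u)$. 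A final antidifferentiation, using $\int\frac{\log^j(1-u)}{1-u}\,du$, together with $f_2(0)=0$, produces \eqref{eq14}.

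The one place that demands care is the $s=2$ bookkeeping: correctly listing all weight--two tuples, tracking the factorial and $u^{k+m}$ weights, and then simplifying $R_2$ and integrating it without arithmetic slips. Everything else is dictated by the linear form of the recurrence and the universal integrating factor $(1-u)^2$.
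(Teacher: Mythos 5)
Your proposal is correct and follows essentially the same route as the paper: specialize \eqref{eq10} at $s=1,2$, observe that $f_s$ enters only linearly with coefficient $2f_0$, and integrate the resulting first-order linear ODEs with the integrating factor $(1-u)^2$ and the initial values $f_s(0)=\beta_s(0)$. Your explicit justification of the linear structure and of the initial conditions, and your derivation of $f_0$ from the $s=0$ case of \eqref{eq10} rather than directly from $\beta_0(n)=1$, are only minor presentational differences.
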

\begin{proof}
The equation \eqref{eq12} follows from the fact that $\beta_{0}(n)=1$ for all $n\geq 0$.\par 
Setting $s=1$ in equation \eqref{eq10} gives
\begin{eqnarray*}
f'_{1}(u)=a_{1}\cdot f_{0}^{(0)}(u)\cdot f_{0}^{(0)}(u)+f_{0}^{(1)}(u)\cdot f_{0}^{(0)}(u)\cdot u+f_{0}^{(1)}(u)\cdot f_{0}^{(0)}(u)\cdot u\\
+f_{1}^{(0)}(u)\cdot f_{0}^{(0)}(u)
+f_{0}^{(0)}(u)\cdot f_{1}^{(0)}(u)\\
=\frac{2}{(1-u)^{2}}+\frac{u}{(1-u)^{3}}+\frac{u}{(1-u)^{3}}+\frac{f_{1}(u)}{(1-u)}+\frac{f_{1}(u)}{(1-u)},
\end{eqnarray*}
where we used the fact that $f_{0}(u)=(1-u)^{-1}$. The above equation is
\begin{equation}
\label{eq15}
f'_{1}(u)-\frac{2f_{1}(u)}{1-u}=\frac{2u}{(1-u)^{3}}+\frac{2}{(1-u)^{2}}.
\end{equation}
Solving the linear differential equation \eqref{eq15} by multiplying with integrating factor $(1-u)^2$ gives 
$$f_{1}(u)=\frac{2}{(1-u)^{2}}\log\frac{1}{1-u}+f_{1}(0)=\frac{2}{(1-u)^{2}}\log\frac{1}{1-u}.$$
Plugging $s=2$ in \eqref{eq10} and solving the resultant differential equation gives
$$f_{2}(u)=\frac{8\log^{2}(1-u)}{(1-u)^{3}}-\frac{8\log(1-u)}{(1-u)^{3}}-\frac{4\log^{2}(1-u)}{(1-u)^{2}}+\frac{12\log(1-u)}{(1-u)^{2}}+\frac{6}{(1-u)^{3}}-\frac{6}{(1-u)^{2}}.$$\\
\end{proof}
\begin{corollary}
We have
$$\beta_{1}(n)=2((n+1)H_{n}-n),$$
and
$$\beta_{2}(n)=4(n+1)^{2}(H_{n}^{2}-H_{n}^{(2)})+4(n+1)^{2}H_{n}-8(n+1)H_{n}+8nH_{n}-4nH_{n}(5+3n)+11n^{2}+15n.$$
\end{corollary}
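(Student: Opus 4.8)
The plan is to read off $\beta_s(n)=[u^n]f_s(u)$ directly from the closed forms for $f_0,f_1,f_2$ established in the preceding corollary, using the standard expansions
\[
[u^n](1-u)^{-(k+1)}=\binom{n+k}{k},\qquad
[u^n](1-u)^{-(k+1)}\log\tfrac{1}{1-u}=\binom{n+k}{k}\bigl(H_{n+k}-H_k\bigr),
\]
\[
[u^n](1-u)^{-(k+1)}\log^{2}\tfrac{1}{1-u}=\binom{n+k}{k}\Bigl[\bigl(H_{n+k}-H_k\bigr)^{2}-\bigl(H^{(2)}_{n+k}-H^{(2)}_k\bigr)\Bigr],
\]
each obtained by repeatedly differentiating the geometric series (or quoted from Knuth). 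For $\beta_1(n)$ the result is immediate: taking $k=1$ in the middle identity and using $f_1(u)=2(1-u)^{-2}\log\frac1{1-u}$ from \eqref{eq13} gives $\beta_1(n)=2(n+1)(H_{n+1}-1)$, which becomes $2((n+1)H_n-n)$ after substituting $H_{n+1}=H_n+\frac1{n+1}$.

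For $\beta_2(n)$ I would apply the three identities termwise to the six summands of $f_2(u)$ in \eqref{eq14}, after rewriting $\log(1-u)=-\log\frac1{1-u}$ and $\log^{2}(1-u)=\log^{2}\frac1{1-u}$; the three $(1-u)^{-3}$ terms use $k=2$ and the three $(1-u)^{-2}$ terms use $k=1$. This expresses $\beta_2(n)$ as an explicit combination of $\binom{n+2}{2}$, $n+1$, $H_{n+2}$, $H^{(2)}_{n+2}$, $H_{n+1}$, $H^{(2)}_{n+1}$, and the squares of the harmonic numbers. One then pushes everything down to $H_n$ and $H^{(2)}_n$ through $H_{n+1}=H_n+\frac1{n+1}$, $H_{n+2}=H_n+\frac1{n+1}+\frac1{n+2}$, $H^{(2)}_{n+1}=H^{(2)}_n+\frac1{(n+1)^2}$, $H^{(2)}_{n+2}=H^{(2)}_n+\frac1{(n+1)^2}+\frac1{(n+2)^2}$, expands $(H_{n+2}-H_2)^2$, and collects the $H_n^2$, $H^{(2)}_n$, $H_n$, and purely rational contributions.

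The only real obstacle is this final simplification: it is long but mechanical, and the delicate point is verifying that all of the $\frac1{n+1}$ and $\frac1{n+2}$ corrections cancel so that no $(n+2)$ survives in a denominator and what remains is exactly $4(n+1)^2(H_n^2-H^{(2)}_n)$ plus the lower-order harmonic terms plus the stated polynomial in $n$. I would control this by isolating the coefficient of $H_n^2$, of $H^{(2)}_n$, of $H_n$, and the $H_n$-free part in turn and checking each group separately; evaluating both sides at $n=1,2,3$ provides an independent sanity check. Having $\beta_1$ and $\beta_2$ in hand, Theorem \ref{mainthm} then follows from $\mathrm{Mean}(C_n)=\beta_1(n)$ and $\mathrm{Var}(C_n)=\beta_2(n)+\beta_1(n)-\beta_1(n)^2$.
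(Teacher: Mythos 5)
Your proposal is correct and is essentially the paper's own proof: the paper likewise extracts $\beta_1(n)$ and $\beta_2(n)$ as the $u^n$-coefficients of $f_1(u)$ and $f_2(u)$ using exactly the expansions of $(1-u)^{-(m+1)}\log\frac{1}{1-u}$ and $(1-u)^{-(m+1)}\log^{2}\frac{1}{1-u}$ (quoted from Greene--Knuth), followed by the same mechanical reduction to $H_n$ and $H_n^{(2)}$. Your explicit check of $\beta_1$ and your plan for collecting the $H_n^2$, $H_n^{(2)}$, $H_n$, and rational parts of $\beta_2$ fill in details the paper leaves implicit, but the route is the same.
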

\begin{proof}
We use following expansions from \cite{Greene}
$$\frac{1}{(1-u)^{m+1}}\log\left(\frac{1}{1-u}\right)=\sum_{n\geq 0}(H_{n+m}-H_{m})\binom{n+m}{n}u^{n};$$
$$\frac{1}{(1-u)^{m+1}}\log^{2}\left(\frac{1}{1-u}\right)=\sum_{n\geq 0}((H_{n+m}-H_{m})^{2}-(H_{n+m}^{(2)}-H_{m}^{(2)}))\binom{n+m}{n}u^{n},$$
to conclude the assertion.
\end{proof}
\begin{proof}[Proof of Theorem \ref{mainthm}]
We conclude the results after noting
$$\text{Mean}(C_{n})=\beta_{1}(n),$$
$$\text{Var}(C_{n})=\beta_{2}(n)-(\beta_{1}(n))^{2}+\beta_{1}(n).$$
\end{proof}
\section{Similar Partial Differential Functional Equations}
We point out that following two examples from \cite{bworld3} can be analyzed using the method employed here:
\begin{itemize}
\item[1.] Moments of total path length $L_{n}$ of a binary search tree built over a random set of $n$ keys can be extracted from the 
$$\frac{\partial L(z,u)}{\partial z}=L^{2}(zu,u),\quad \frac{\partial L(0,u)}{\partial z}=1,$$
where $L(z,u)=\sum_{n\geq 0}L_{n}(u)z^{n}$ is the bivariate generating function.
\item[2.] A \emph{digital} search tree for which the bivariate generating function $L(z,u)$ satisfies 
$$\frac{\partial L(z,u)}{\partial z}=L^{2}\left(\frac{1}{2}zu,u\right),$$
with $L(z,0)=1.$
\end{itemize}
\bibliographystyle{plain}
\bibliography{sample.bib}
\end{document}